\theoremstyle{plain}
 \newtheorem{thm}{Theorem}[section]
 \newtheorem{prop}[thm]{Proposition}
 \newtheorem{lemma}[thm]{Lemma}
 \newtheorem{cor}[thm]{Corollary}
\numberwithin{equation}{section}
\newcommand{\mbR}{\mathbb{R}}
\newcommand{\mbC}{\mathbb{C}}
\newcommand{\mbT}{\mathbb{T}}
\newcommand{\clH}{\mathcal{H}}
\newcommand{\clK}{\mathcal{K}}
\newcommand{\Tr}{\operatorname{Tr}}
\newcommand{\supp}{\operatorname{supp}}
\renewcommand{\Im}{\operatorname{Im}}
\newcommand{\hz}{\mathfrak h_z}
\newcommand{\dom}{\operatorname{dom}} 
\newcommand{\ran}{\operatorname{ran}}
\newcommand{\scal}[1]{\left\langle #1 \right\rangle}
\newcommand{\xia}{\xi^{(a)}}
\newcommand{\xis}{\xi^{(s)}}
\newcommand{\mua}{\mu^{(a)}}
\newcommand{\mus}{\mu^{(s)}}
\newcommand{\ind}{\operatorname{ind}}
\newcommand{\eps}{\varepsilon}
\begin{document}
\title{Resonance index and singular $\mu$-invariant}
\author{Nurulla Azamov}
\author{Tom Daniels}
\address{College of Science and Engineering
   \\ Flinders University 
   \\ South Rd, Tonsley, SA 5042 Australia}
\email{nurulla.azamov@flinders.edu.au}
\email{tom.daniels@flinders.edu.au}
 \keywords{Resonance index, $\mu$-invariant, singular spectral shift
 function, scattering matrix}

 \subjclass[2010]{ 
     Primary 47A55, 47A10, 47A70, 47A40;
 }

\begin{abstract} 
With the essential spectrum of a self-adjoint operator
given a relatively trace class perturbation 
one can associate an integer-valued invariant which admits different 
descriptions as the singular spectral shift function,
total resonance index, and singular $\mu$-invariant.
In this paper we give a direct proof of the equality of the total resonance index
and singular $\mu$-invariant assuming only the limiting absorption principle.
The proof is based on an application of the argument principle 
to the poles and zeros of the analytic continuation of the scattering matrix considered as a function of the coupling parameter.
\end{abstract}

\maketitle

\section{Introduction}
Let $\clH$ and $\clK$ be separable complex Hilbert spaces, 
$\lambda$ a real number,~$H_0$ a self-adjoint
operator on~$\clH,$ and~$V$ a symmetric form on $\clH$ such that
\begin{enumerate}
  \item[(1)] on the domain of $|H_0|^{1/2},$ $V$ admits a factorisation 
  $
    V = F^*JF,
  $ 
  where $F \colon \clH \to \clK$ is a closed operator 
  and $J$ is a self-adjoint bounded operator
  on $\clK,$
  \item[(2)] the sandwiched resolvent $T_z(H_0) = FR_z(H_0)F^*,$ 
  where $R_z(H) = (H-z)^{-1}$ is the resolvent of~$H,$ is compact 
  for some (and thus for any) $z \notin \sigma(H_0),$ the spectrum of~$H_0,$
  \item[(3)] $F$ is bounded or $H_0$ is semi-bounded,
  \item[(4)] the limit
  $T_{\lambda+i0}(H_0) := \lim_{y \to 0^+} T_{\lambda+iy}(H_0)$ exists with respect to the operator norm.
\end{enumerate}

Condition~(1) is meant in the sense that 
$
 V\colon(f,g)\mapsto \scal{Ff,JFg}
$
for any $f,g\in\dom |H_0|^{1/2}.$ 
Without loss of generality we will assume that~$F$ has trivial kernel (by extending to the kernel as any compact operator) and that $F\dom |H_0|^{1/2}$ is dense in~$\clK.$ 

Condition~(2) is meant in the sense that $T_z(H_0)$ is bounded on the domain of~$F^*$ and extends to a compact operator (denoted by the same symbol) on~$\clK.$
Combined with~(3), this implies that the perturbed operator $H_r := H_0 + rV,$ $r\in\mbR,$ is well-defined, as an operator-sum if $F$ is bounded or a form-sum if $H_0$ is semi-bounded. 

Condition~(4) is guaranteed under various additional conditions by results known collectively as the {\em Limiting Absorption Principle}, which usually refers to the fact that the set of points~$\lambda$ satisfying~(4) has full Lebesgue measure in~$\mbR.$ 

Under these conditions one can consider the following operator, which can be interpreted 
as the scattering matrix for the pair of operators~$H_0,$ $H_r,$
\begin{equation} \label{F: stationary formula}
  S(\lambda+i0; H_r,H_0) = 1 - 2ir \sqrt{\Im T_{\lambda+i0}(H_0)} J (1+rT_{\lambda+i0}(H_0)J)^{-1}\sqrt{\Im T_{\lambda+i0}(H_0)},
\end{equation}
provided the operator $1+rT_{\lambda+i0}(H_0)J$ has bounded inverse. 
By the analytic Fredholm alternative, the set of real numbers~$r$ for which the operator $1+rT_{\lambda+i0}(H_0)J$ is not invertible is discrete; elements of this set are called \emph{resonance points} (following \cite{Aza11,Aza16}).
It is a well-known fact which can also be verified by a simple calculation that for non-resonant values of the coupling parameter~$r$ the operator~\eqref{F: stationary formula} is unitary.
Moreover, for $y>0$ and any $r \in \mbR,$ the operator
\begin{equation} \label{F: stationary formula for y>0}
  S(\lambda+iy; H_r,H_0) = 1 - 2ir \sqrt{\Im T_{\lambda+i\smash[b]{y}}(H_0)} J (1+rT_{\lambda+iy}(H_0)J)^{-1}\sqrt{\Im T_{\lambda+i\smash[b]{y}}(H_0)}
\end{equation}
is also unitary and depends analytically on $y.$
Further, as $y \to +\infty,$ the operator~\eqref{F: stationary formula for y>0} converges in the uniform topology to the identity operator~$1$ and it can be shown that this convergence is locally uniform with respect to~$r$ in~$\mbR$
(Proposition~\ref{P: eigenvalues of S(z,s) converge uniformly}). 

If the value of the coupling parameter~$r$ is non-resonant then for a fixed number $e^{i\theta}$ on the unit circle $\mbT,$ one can count the number of eigenvalues of the scattering matrix $S(\lambda+iy; H_r,H_0)$ which cross the point $e^{i\theta}$ in the anticlockwise direction as~$y$ goes from $+\infty$ to~$0.$ 
Following~\cite{Pus01}, we denote this number by $\mu(\theta,\lambda; H_r,H_0)$ and call it the \emph{$\mu$-invariant}. 
The $\mu$-invariant measures the spectral flow of eigenvalues of the scattering matrix.
For relatively trace-class perturbations~$V$ the $\mu$-invariant is associated with the Lifshitz-Krein \emph{spectral shift function} (SSF) $\xi(\lambda)$ (see e.g.~\cite{Yaf92}) by the formula~\cite{Pus01} 
\[
  \xi(\lambda; H_1,H_0) = - \frac {1}{2\pi} \int_0^{2\pi} \mu(\theta, \lambda; H_1,H_0)\,d\theta, 
     \ \ \text{for  a.e.}\ \lambda\in\mbR.
\]

Sending the coupling parameter~$r$ from~$1$ to~$0$ provides another natural way to continuously deform the scattering matrix $S(\lambda+i0; H_1,H_0)$ to the identity operator~\cite{Aza11}.
Indeed, by the analytic Fredholm alternative the operator~\eqref{F: stationary formula} is a meromorphic function of the coupling parameter~$r$ considered as a complex variable.
Since this operator-function is unitary for non-resonant real~$r,$ it cannot have poles on the real axis, so that the resonant values of~$r$ are in fact removable singularities. 
In particular,
\begin{equation} \label{F: S(r) is continuous}
  \text{the map } \ [0,1] \ni r \ \mapsto \ S(\lambda+i0; H_r,H_0) \ \ \text{is continuous}. 
\end{equation}

The \emph{absolutely continuous (a.c.)~$\mu$-invariant} 
$\mua(\theta, \lambda; H_1,H_0)$ \cite{Aza11}
is the spectral flow of eigenvalues of the scattering matrix through $e^{i\theta} \in \mbT$ 
corresponding to the path~\eqref{F: S(r) is continuous}. 
This terminology is justified by the formula \cite{Aza11,AzaDan18}
\[
  \xia(\lambda; H_1,H_0) = - \frac {1}{2\pi} \int_0^{2\pi} \mua(\theta, \lambda; H_1,H_0)\,d\theta, 
      \ \ \text{for a.e.} \ \lambda\in\mbR,
\]
where $\xia(\lambda; H_1,H_0)$ is the \emph{a.c.~SSF} 
defined as the density of the a.c.~measure given by
\[
  \xia(\varphi) = \int_0^1 \Tr\left(E^{(a)}_{r}(\supp \varphi)V\varphi(H_r)\right)\,dr, \ \ \varphi \in C_c(\mbR).
\]
This formula should be compared to the Birman-Solomyak \cite{BirSol75} formula, in which $E^{(a)}_r$ is absent.
Here~$E^{(a)}_r(\cdot)$ (respectively,~$E^{(s)}_r(\cdot)$) is the spectral measure of the a.c.~(respectively, singular) part of the self-adjoint operator~$H_r.$ 

The two formulas, connecting $\xi$ and $\xia$ with $\mu$ and $\mua$ respectively, imply that
\begin{equation*}
  \begin{split}
    \xis(\lambda; H_1,H_0) & = - \frac {1}{2\pi} \int_0^{2\pi} \mus(\lambda; H_1,H_0)\,d\theta
    \\ & = - \mus(\lambda; H_1,H_0),
  \end{split}
\end{equation*}
where $\xis$ is the \emph{singular SSF} defined as the density of the a.c.~measure 
\[
  \xis(\varphi) := \int_0^1 \Tr\left(E^{(s)}_{r}(\supp \varphi)V\varphi(H_r)\right)\,dr, \ \ \varphi \in C_c(\mbR),
\]
and 
$
  \mus(\lambda; H_1,H_0) := \mu(\theta, \lambda; H_1,H_0) - \mua(\theta, \lambda; H_1,H_0)
$
is the \emph{singular $\mu$-invariant}. 
The angle variable $\theta$ is omitted from the list of arguments of the singular $\mu$-invariant 
since a simple topological argument shows that it does not depend on $\theta,$
see \cite[Section 9]{Aza11}.
This implies in particular that the singular SSF $\xis(\lambda;H_1,H_0)$ takes integer values for a.e.~$\lambda\in\mbR.$ 

One can easily check that a real number~$r_\lambda$ is a resonance point if and only if the real number $(s-r_\lambda)^{-1}$ is an eigenvalue of positive algebraic multiplicity~$N$ for the compact operator~$T_{\lambda+i0}(H_s)J$ 
for some (and hence for any) real non-resonant~$s\in\mbR.$ 
If we shift $\lambda+i0$ to $\lambda+iy,$ the eigenvalue $(s-r_\lambda)^{-1}$ changes and in general splits into eigenvalues (listed according to their multiplicities) $(s-r_{\lambda+iy}^1)^{-1}, \ldots, (s-r_{\lambda+iy}^N)^{-1}$ of the operator $T_{\lambda+iy}(H_s)J.$ 
It is well-known and not difficult to show that these shifted eigenvalues are all non-real.
Let $N_+$ and $N_-$ be the numbers of shifted eigenvalues in the upper $\mbC_+$ and lower $\mbC_-$ complex half-planes respectively.
The \emph{resonance index} \cite{Aza16,Aza17,AzaDan18} of the triple $(\lambda; H_{r_\lambda},V)$ is the number 
\[
  \ind_{res}(\lambda; H_{r_\lambda},V) := N_+-N_-.
\]
It is easily confirmed that this definition is independent of the choice of a non-resonant point~$s.$
The singular SSF $\xis(\lambda; H_1,H_0)$ obeys \cite{Aza16,AzaDan18} 
\[
  \xis(\lambda) = \sum_{r_\lambda \in [0,1]} \ind_{res}(\lambda; H_{r_\lambda},V),  \ \ \text{for a.e.} \ \lambda\in\mbR,
\]
where the sum is taken over all resonance points from $[0,1],$ of which there are a finite number.
Hence, for relatively trace-class perturbations~$V$ 
\begin{equation} \label{F: mus=sum ind(res)}
  - \mus(\lambda; H_1,H_0) = \sum_{r_\lambda \in [0,1]} \ind_{res}(\lambda; H_{r_\lambda},V), \ \ \text{for a.e.} \ \lambda\in\mbR.
\end{equation}

In \cite{AzaDan18} this equality is proved for operators $H_0$ and $H_1$ with trace class $R_z(H_0) - R_z(H_1),$
by demonstrating that both functions are equal to $\xis(\lambda)$ a.e.
In this paper, by way of proving the following theorem, we show that~\eqref{F: mus=sum ind(res)} holds
assuming only that the operators~$H_0$ and~$V$ satisfy the conditions
(1)--(4). We emphasise that these conditions do not involve any condition of trace class type,
and under such conditions the SSF and a.c.~SSF do not necessarily exist.
Having said this, it is possible that the Limiting Absorption Principle alone 
could suffice for existence of the singular SSF as was conjectured in~\cite{Aza11}. 

\begin{thm}\label{T: main thm}
For a self-adjoint operator $H_0$ and a symmetric form~$V$ obeying $(1)$--$(4)$, the equality~\eqref{F: mus=sum ind(res)} holds for any~$\lambda$ for which~$T_{\lambda+i0}(H_1)$ also exists. 
\end{thm}

The conditions (1)--(4) are well-known in scattering theory, see e.g. \cite{BirEnt67,KatKur71,Agm75,Kur76,Yaf92}. 
Three classical examples for which they hold for a.e.~$\lambda$ are: 
\begin{itemize}[leftmargin=1.6\parindent]
\item An arbitrary self-adjoint operator~$H_0$ and a trace
class self-adjoint operator~$V;$
\item A Schr\"odinger operator~$H_0 = -\Delta +
V_0(x)$ on $L_2(\mbR^\nu)$ with bounded measurable real-valued
function~$V_0(x)$ and an operator~$V$ of multiplication by a real-valued
function~$V(x)$ such that $|V(x)| \leq C
(1+|x|)^{-\nu-\eps}$ for some $C,\eps>0;$ 
\item The free Laplacian $H_0 = -\Delta$ on $L_2(\mbR^\nu)$ and 
an operator~$V$ of multiplication by a real-valued
function~$V(x)$ such that $|V(x)| \leq C
(1+|x|)^{-1-\eps}$ for some $C,\eps>0$.
\end{itemize}
In the last example the perturbation may fail to be of relatively trace class type, in which case the result of Theorem~\ref{T: main thm} is new.

We refer to the sum of resonance indices appearing on the right hand side of~\eqref{F: mus=sum ind(res)} as the {\em total resonance index}.
If the point~$\lambda$ lies outside of the common essential spectrum of the operators~$H_0$ and~$H_1,$ then the total resonance index coincides with the classical notion of spectral flow, that~is, the net number of eigenvalues which cross a given point in the positive direction.
This notion has various guises including the Fredholm index of a pair of projections as in~\cite{ASS,Phi} and the axiomatic description of~\cite{RobSal}.
Proof that the total resonance index coincides with these definitions and further information about the connection between resonance index and spectral flow can be found in~\cite{Aza17}. 
For this reason, as in~\cite{Aza16}, the total resonance index can be interpreted as an extension of the flow of singular spectrum into the essential spectrum.
The introduction of~\cite{Aza16} provides more motivation for this work. 


\medskip
{\it Acknowledgements.} We thank Prof.\ Peter Dodds for a useful discussion.

\section{Sketch of proof}
In this section we give a brief sketch of the proof.
For a fixed~$\lambda$ we consider~\eqref{F: stationary formula for y>0}
as a function, $S(\lambda+iy,r),$ of~$y$ and~$r,$ taking values in the uniform-normed group 
\[
 G(\clK) :=\{\text{invertible operators on } \clK \text{ of the form } `1 + \text{compact'}\}.
\]
There are two ways to continuously deform the scattering matrix
$S(\lambda+iy,r)|_{y=0, r=1}$ to the identity operator: by sending~$y$ from $0^+$ to $+\infty$ 
and by sending~$r$ from~$1$ to~$0.$ 
As the scattering matrix is deformed to the identity operator
its eigenvalues are also continuously deformed to~$1.$ 
The singular $\mu$-invariant is 
the difference of spectral flows through a point $e^{i\theta}$ on the unit circle, 
corresponding to these deformations. 
An attempt to deform one of these paths to another
encounters obstructions 
in the form of poles and zeros of the scattering matrix.
Overcoming these obstructions gives rise to
the resonance index $N_+-N_-$ which is the difference of the number
of poles and zeros from the upper complex half-plane, counted with their multiplicities.

We deform the path 
$
  (y \colon +\infty \rightsquigarrow 0, \ r = 1)
$ 
to the path 
$
  (y = \eps, \ r \colon 0 \rightsquigarrow 1 \ \& \ y \colon \eps \rightsquigarrow 0, r = 1),
$
as shown in Fig.~\ref{Fig: deformation} below. 
This deformation does not meet any obstructions since~$S$ is continuous 
in $[\eps,+\infty] \times [0,1]$ for $\eps>0,$ 
but it cannot be pushed to the rectangle's lower rim $y=0.$ 
However, by~\eqref{F: S(r) is continuous}, the function $S(\lambda+i0,r)$ is continuous. 
Deformation along the path~\eqref{F: S(r) is continuous} gives the $\mua$-invariant.

\begin{figure}[ht]
\begin{center}
\begin{tikzpicture}
\draw[->] (0,0) -- (7,0);
\node [right] at (7,0) {$r$};
\draw[->] (0,0) -- (0,4);
\node [above] at (0,4) {$y$};
\node [below] at (0,0) {$0$};
\node [below] at (6,0) {$1\,$};
\node [left] at (0,0) {$0$};
\node [left] at (0,3.6) {$+\infty$};

\draw (0,3.6) -- (6,3.6);
\node [above] at (3,3.6) {\tiny 1 \qquad 1 \qquad 1 \qquad 1 \qquad 1 \qquad 1 \qquad 1 };
\node [left] at (0,0.7) {\tiny 1};
\node [left] at (0,1.4) {\tiny 1};
\node [left] at (0,2.1) {\tiny 1};
\node [left] at (0,2.8) {\tiny 1};

\draw[fill] (1.8,0) circle (0.03);
\node [below] at (1.8,0) {$r_\lambda\phantom{'}$};
\draw[fill] (2.7,0) circle (0.03);
\node [below] at (2.7,0) {$r_\lambda'$};
\draw[fill] (4.5,0) circle (0.03);
\node [below] at (4.5,0) {$r_\lambda''$};

\begin{scope}[thick,decoration={
    markings,
    mark=at position 0.55 with {\arrow{latex}}}
    ] 
\draw[postaction={decorate}] (6,3.6) to (6,2);
\draw[postaction={decorate}] (6,2) to (6,0);
\draw[postaction={decorate}] (0,0.2) to (2,0.2);
\draw[postaction={decorate}] (2,0.2) to (4,0.2);
\draw[postaction={decorate}] (4,0.2) to (5.95,0.2);
\end{scope}
\draw[thick] (5.95,0) -- (5.95,0.2);
\node [above] at (3,0.2) {$y=\varepsilon$};
\draw[->] (5.8,1.5) to [bend right] (4.2,0.4); 

\draw[->] (6.8,0.7) to (6.2,0.15);
\node [above] at (7.5,0.65) {\tiny $S(\lambda+i0;H_1,H_0)$};
\end{tikzpicture}
\end{center}
\caption{The three points $r_\lambda, r'_\lambda,r''_\lambda$ represent resonance points from $[0,1]$. The $\mu$-invariant is the spectral flow of $S(\lambda+iy,r)$ along either of the paths shown.}\label{Fig: deformation}
\end{figure}
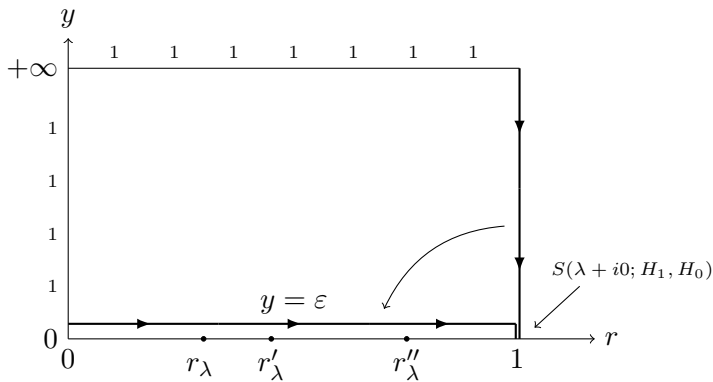

Now we fix a small value $\eps$ of~$y$ and consider $S(\lambda+i\eps,s)$ as a function of~$s\in \mbC.$ 
The factor $(1+sT_{\lambda+iy}J)^{-1}$ can have real poles only if $y=0,$ hence
such a pole~$r_\lambda = r_{\lambda+iy}|_{y=0}$ for small non-real values of~$y$
gets off the real axis. 
In doing so, it may split into finitely many poles (Fig.~\ref{Fig: splitting}), collectively known as the $r_\lambda$-group, whose multiplicities total that of $r_\lambda.$
A number $r$ is a \emph{zero} if $\dim \ker S(\lambda+iy,r)>0.$ 
Conjugates of poles of $S(\lambda+iy,s)$ are its zeros and vice versa,
as can be seen from the equality~\eqref{F: sqrt(Im T) S = M sqrt(Im T)}.

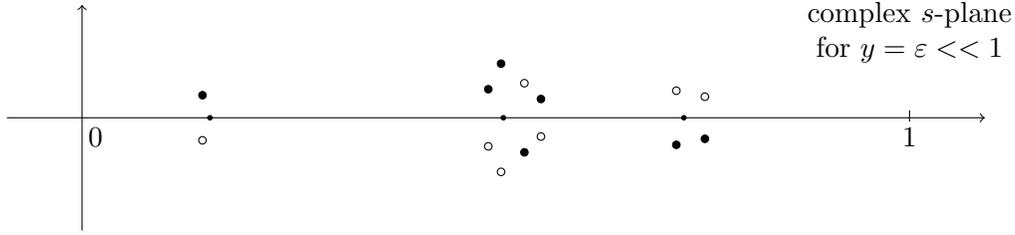
\begin{figure}[ht]
\begin{center}
\begin{tikzpicture}
\draw[->] (0,-1.5) -- (0,1.5);
\node [below] at (0.18,0) {$0$};
\draw[->] (-1,0) -- (12,0);
\node [below] at (11,0) {$1$};
\draw (11,-0.05) -- (11,0.1);
\node [above, align=center] at (11,0.6) {complex $s$-plane \\ for $y=\eps<<1$};

\draw[fill] (1.7,0) circle (0.03);
\draw[fill] (1.6,.3) circle (0.05);
\draw (1.6,-.3) circle (0.05);

\draw[fill] (5.6,0) circle (0.03);
\draw[fill] (5.4,.38) circle (0.05); 
\draw (5.4,-.38) circle (0.05);
\draw[fill] (5.57,.72) circle (0.05); 
\draw (5.57,-.72) circle (0.05);
\draw[fill] (6.1,.25) circle (0.05); 
\draw (6.1,-.25) circle (0.05);
\draw (5.88,.46) circle (0.05);  
\draw[fill] (5.88,-.46) circle (0.05);

\draw[fill] (8,0) circle (0.03);
\draw (7.9,.36) circle (0.05); 
\draw[fill] (7.9,-.36) circle (0.05);
\draw (8.28,.28) circle (0.05); 
\draw[fill] (8.28,-.28) circle (0.05);
\end{tikzpicture}
\end{center}
\caption{Black dots are poles (resonance points) and white dots are zeros (anti-resonance points).}\label{Fig: splitting}
\end{figure}

The deformation of  $S(\lambda+iy,s)$ which gives rise 
to the $\mu$-invariant is shown in Fig.~\ref{Fig: mu-invariant}.

\begin{figure}[ht]
\begin{center}
\begin{tikzpicture}
\draw[->] (0,-1.5) -- (0,1.5);
\node [below] at (0.18,0) {$0$};
\draw[->] (-1,0) -- (12,0);
\node [below] at (11,0) {$1$};
\draw (11,-0.05) -- (11,0.1);
\node [above, align=center] at (11,0.6) {complex $s$-plane \\ for $y=\eps<<1$};

\draw[fill] (1.7,0) circle (0.03);
\draw[fill] (1.6,.3) circle (0.05);
\draw (1.6,-.3) circle (0.05);

\draw[fill] (5.6,0) circle (0.03);
\draw[fill] (5.4,.38) circle (0.05); 
\draw (5.4,-.38) circle (0.05);
\draw[fill] (5.57,.72) circle (0.05); 
\draw (5.57,-.72) circle (0.05);
\draw[fill] (6.1,.25) circle (0.05); 
\draw (6.1,-.25) circle (0.05);
\draw (5.88,.46) circle (0.05);  
\draw[fill] (5.88,-.46) circle (0.05);

\draw[fill] (8,0) circle (0.03);
\draw (7.9,.36) circle (0.05); 
\draw[fill] (7.9,-.36) circle (0.05);
\draw (8.28,.28) circle (0.05); 
\draw[fill] (8.28,-.28) circle (0.05);

\begin{scope}[thick,decoration={
    markings,
    mark=at position 0.55 with {\arrow{latex}}}
    ] 
\draw[postaction={decorate}] (0,0) to (2,0);
\draw[postaction={decorate}] (2,0) to (4,0);
\draw[postaction={decorate}] (4,0) to (6,0);
\draw[postaction={decorate}] (6,0) to (8,0);
\draw[postaction={decorate}] (8,0) to (11,0);
\end{scope}
\end{tikzpicture}
\end{center}
\caption{The spectral flow of $S(\lambda+iy,s)$ along the path shown gives rise to the $\mu$-invariant.}\label{Fig: mu-invariant}
\end{figure}
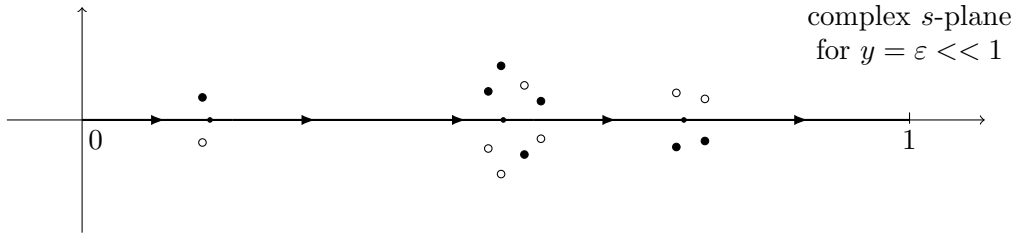

To correspond to the $\mua$-invariant, 
a deformation of $S(\lambda+iy,s)$ 
should circumvent the poles and the zeros as shown in Fig.~\ref{Fig: a.c. mu-invariant}. 
Indeed, as $y \to 0$ the $\mu$-invariant of the path in Fig.~\ref{Fig: a.c. mu-invariant}
does not change since this deformation does not encounter obstructions;
at the same time all poles and zeros of the groups of resonance points $r_\lambda, r'_\lambda,\ldots$
from $[0,1]$ converge to them where they eventually cancel each other. 
Once $y$ reaches $0$ the path below can be deformed to the straight path from $s=1$
to $s=0$ since $S(\lambda+i0,s)$ has no singularities in a neighbourhood of~$[0,1].$

\begin{figure}[ht]
\begin{center}
\begin{tikzpicture}
\draw[->] (0,-1.5) -- (0,1.5);
\node [below] at (0.18,0) {$0$};
\draw[->] (-1,0) -- (12,0);
\node [below] at (11,0) {$1$};
\draw (11,-0.05) -- (11,0.1);
\node [above, align=center] at (11,0.6) {complex $s$-plane \\ for $y=\eps<<1$};

\draw[fill] (1.7,0) circle (0.03);
\draw[fill] (1.6,.3) circle (0.05);
\draw (1.6,-.3) circle (0.05);

\draw[fill] (5.6,0) circle (0.03);
\draw[fill] (5.4,.38) circle (0.05); 
\draw (5.4,-.38) circle (0.05);
\draw[fill] (5.57,.72) circle (0.05); 
\draw (5.57,-.72) circle (0.05);
\draw[fill] (6.1,.25) circle (0.05); 
\draw (6.1,-.25) circle (0.05);
\draw (5.88,.46) circle (0.05);  
\draw[fill] (5.88,-.46) circle (0.05);

\draw[fill] (8,0) circle (0.03);
\draw (7.9,.36) circle (0.05); 
\draw[fill] (7.9,-.36) circle (0.05);
\draw (8.28,.28) circle (0.05); 
\draw[fill] (8.28,-.28) circle (0.05);

\begin{scope}[thick,decoration={
    markings,
    mark=at position 0.5 with {\arrow{latex}}}
    ] 
\draw[postaction={decorate}] (0,0) to (1.2,0);
\draw[postaction={decorate}] (1.2,0) arc (180:0:0.5);
\draw[postaction={decorate}] (2.2,0) to (4.6,0);
\draw[postaction={decorate}] (4.6,0) arc (180:0:1);
\draw[postaction={decorate}] (6.6,0) to (7.35,0);
\draw[postaction={decorate}] (7.35,0) arc (180:0:0.65);
\draw[postaction={decorate}] (8.65,0) to (11,0);
\end{scope}
\end{tikzpicture}
\end{center}
\caption{The spectral flow of $S(\lambda+iy,s)$ along the path shown gives rise to the $\mu^{(a)}$-invariant.}\label{Fig: a.c. mu-invariant}
\end{figure}
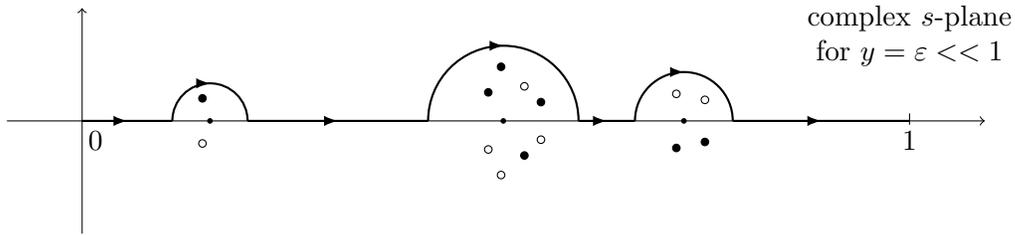

Hence, the singular $\mu$-invariant
is equal to the total number of windings of eigenvalues 
of the scattering matrix along the (anticlockwise oriented) contours
enclosing the poles and zeros of $S(\lambda+iy,s)$ from $\mbC_+$ close to the real resonance points.

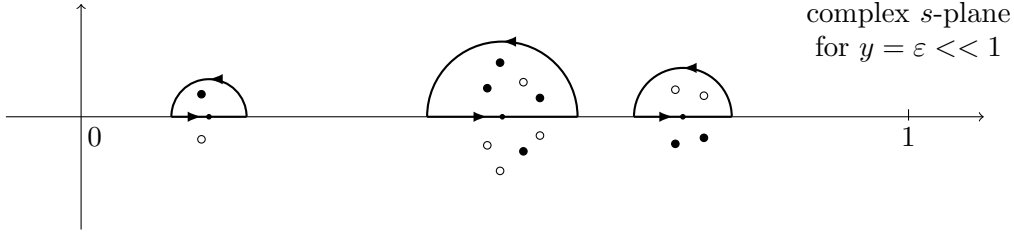
\begin{figure}[ht]
\begin{center}
\begin{tikzpicture}
\draw[->] (0,-1.5) -- (0,1.5);
\node [below] at (0.18,0) {$0$};
\draw[->] (-1,0) -- (12,0);
\node [below] at (11,0) {$1$};
\draw (11,-0.05) -- (11,0.1);
\node [above, align=center] at (11,0.6) {complex $s$-plane \\ for $y=\eps<<1$};

\draw[fill] (1.7,0) circle (0.03);
\draw[fill] (1.6,.3) circle (0.05);
\draw (1.6,-.3) circle (0.05);

\draw[fill] (5.6,0) circle (0.03);
\draw[fill] (5.4,.38) circle (0.05); 
\draw (5.4,-.38) circle (0.05);
\draw[fill] (5.57,.72) circle (0.05); 
\draw (5.57,-.72) circle (0.05);
\draw[fill] (6.1,.25) circle (0.05); 
\draw (6.1,-.25) circle (0.05);
\draw (5.88,.46) circle (0.05);  
\draw[fill] (5.88,-.46) circle (0.05);

\draw[fill] (8,0) circle (0.03);
\draw (7.9,.36) circle (0.05); 
\draw[fill] (7.9,-.36) circle (0.05);
\draw (8.28,.28) circle (0.05); 
\draw[fill] (8.28,-.28) circle (0.05);

\begin{scope}[thick,decoration={
    markings,
    mark=at position 0.5 with {\arrow{latex}}}
    ] 
\draw[postaction={decorate}] (2.2,0) arc (0:180:0.5);
\draw[postaction={decorate}] (6.6,0) arc (0:180:1);
\draw[postaction={decorate}] (8.65,0) arc (0:180:0.65);
\end{scope}
\begin{scope}[thick,decoration={
    markings,
    mark=at position 0.4 with {\arrow{latex}}}
    ]
\draw[postaction={decorate}] (1.2,0) to (2.2,0);
\draw[postaction={decorate}] (4.6,0) to (6.6,0);
\draw[postaction={decorate}] (7.35,0) to (8.65,0);
\end{scope}
\end{tikzpicture}
\end{center}
\caption{The $\mu^{(s)}$-invariant is the spectral flow of $S(\lambda+iy,s)$ along the path shown.}\label{Fig: s. mu-invariant}
\end{figure}

Combining this with the argument principle 
applied to the eigenvalues of the $S$-matrix shows that
the singular $\mu$-invariant is the number of zeros minus the number of poles inside 
of these contours, which is the (negative of the) total resonance index by definition. 

We note that it is not essential to circumvent the poles and zeros in the upper half-plane:
in this case the numbers of poles and zeros swap 
but this is compensated by the change of the contours' orientation. 

\section{Proof of Theorem \ref{T: main thm}}
Let $H_0$ and~$V=F^*JF$ be as in the introduction.
Let~$z$ belong to the resolvent set of $H_0,$
so in particular the operator $T_z(H_0)$ exists.
We denote by $S(z,s)$ the operator function
\begin{equation} \label{F: S(s,z)=I formula}
  \begin{split}
  S(z; H_s,H_0) & = 1 - 2i s \sqrt{\Im T_{z}(H_0)} J (1+sT_{z}(H_0)J)^{-1}\sqrt{\Im T_{z}(H_0)} \\
                & = 1 - 2i s \sqrt{\Im T_{z}(H_0)} (1+sJT_{z}(H_0))^{-1} J \sqrt{\Im T_{z}(H_0)},
  \end{split}
\end{equation}
where~$s \in \mbC$  and $z \in \mbC_+.$
By definition, a point~$s$ is a \emph{resonance point} or a \emph{pole}, 
respectively an \emph{anti-resonance point} or a \emph{zero}, 
corresponding to~$z,$ if 
\begin{equation} \label{F: 1+sT(z)J}
  \ker(1+sT_z(H_0)J),
\end{equation}
respectively
\begin{equation} \label{F: 1+sT(bar z)J}
  \ker(1+sT_{\bar z}(H_0)J),
\end{equation}
is non-zero. 
(The terminology ``resonance and anti-resonance points'' comes from \cite{Aza16},
here we also call them ``poles and zeros'' respectively, since they are poles and zeros 
of~\eqref{F: S(s,z)=I formula}, Corollary~\ref{C: unused}; we will use these words interchangeably).
The dimension of~\eqref{F: 1+sT(z)J} (respectively,~\eqref{F: 1+sT(bar z)J})
will be called the \emph{algebraic multiplicity} of the pole (respectively, zero).
The complex conjugate of a pole is a zero and their algebraic multiplicities 
are equal. Poles will usually be denoted by $r_z$ and zeros by $\bar r_z.$
Given $\Im z \neq 0$ we say that~$s$ is \emph{non-critical} 
if~$s$ is neither a pole nor a zero for~$z.$

\begin{lemma} For~$z$ with $\Im z>0$ and a non-critical $s$ 
\begin{equation} \label{F: sqrt(Im T) S = M sqrt(Im T)}
    \sqrt{\Im T_{z}(H_0)} S(z,s) = (1+s T_{\bar z}(H_0)J)(1+s T_{z}(H_0)J)^{-1}\sqrt{\Im T_{z}(H_0)}
\end{equation}
and
\begin{equation} \label{F: S sqrt(Im T) = sqrt(Im T) M}
    S(z,s) \sqrt{\Im T_{z}(H_0)} = \sqrt{\Im T_{z}(H_0)} (1+s J T_z(H_0))^{-1}(1+s J T_{\bar z}(H_0)).
\end{equation}
\end{lemma}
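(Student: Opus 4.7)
The plan is to reduce both identities to the defining formulas (\ref{F: S(s,z)=I formula}) and (\ref{F: S(s,z)=II formula}) by straightforward algebraic manipulation, using only two ingredients: the identity $2i\,\Im T_z(H_0) = T_z(H_0) - T_{\bar z}(H_0)$, and the trivial decomposition $-sT_z(H_0)J = 1 - (1+sT_z(H_0)J)$ (and its right-handed analogue with $JT_z$ in place of $T_zJ$).

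For (\ref{F: sqrt(Im T) S = M sqrt(Im T)}) I would multiply (\ref{F: S(s,z)=I formula}) on the left by $\sqrt{\Im T_z(H_0)}$, which merges the leading factor on the right into $\Im T_z(H_0)$. Substituting $2i\,\Im T_z(H_0) = T_z(H_0) - T_{\bar z}(H_0)$ splits the perturbation term into two pieces: one proportional to $sT_z(H_0)J(1+sT_z(H_0)J)^{-1}$, which by the trivial identity above equals $1 - (1+sT_z(H_0)J)^{-1}$, and one proportional to $sT_{\bar z}(H_0)J(1+sT_z(H_0)J)^{-1}$. Collecting the three remaining terms in front of $\sqrt{\Im T_z(H_0)}$ assembles exactly the prefactor $(1+sT_{\bar z}(H_0)J)(1+sT_z(H_0)J)^{-1}$.

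For (\ref{F: S sqrt(Im T) = sqrt(Im T) M}) I would start from the alternative form (\ref{F: S(s,z)=II formula}) and multiply on the right by $\sqrt{\Im T_z(H_0)}$. The same two substitutions are then carried out on the other side of the inverse factor: since $J$ now sits to the right of $(1+sJT_z(H_0))^{-1}$, the decomposition produces $(1+sJT_z(H_0))^{-1}(1+sJT_{\bar z}(H_0))$ sandwiched to the right of $\sqrt{\Im T_z(H_0)}$, as required.

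The only point that deserves care is that these identities must be interpreted as equalities of meromorphic operator-valued functions of the complex variable $s$. The factor $(1+sT_z(H_0)J)^{-1}$ is meromorphic in $s$ by the analytic Fredholm alternative, with poles precisely at the resonance points, so both sides of each identity are meromorphic in $s$; the algebraic computation above is valid on the dense open complement of the resonance set, and the statement extends across the isolated singularities by analytic continuation. I do not expect any real obstacle in the proof — the lemma is essentially the statement that the resolvent identity $T_z - T_{\bar z} = 2i\,\Im T_z$ is built into the stationary formula for the scattering matrix, and the work is simply to display it.
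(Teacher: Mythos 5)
Your proposal is correct and follows essentially the same route as the paper: multiply \eqref{F: S(s,z)=I formula} (respectively \eqref{F: S(s,z)=II formula}) on the appropriate side by $\sqrt{\Im T_z(H_0)}$, substitute $2i\,\Im T_z(H_0)=T_z(H_0)-T_{\bar z}(H_0)$, use $sT_z(H_0)J(1+sT_z(H_0)J)^{-1}=1-(1+sT_z(H_0)J)^{-1}$ (or its right-handed analogue), and collect terms. The extra paragraph on meromorphy via analytic Fredholm is a reasonable gloss but adds nothing beyond the paper's argument.
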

\begin{proof} Using~\eqref{F: S(s,z)=I formula}, we have
\begin{equation*}
  \begin{split}
    \sqrt{\Im T_{z}(H_0)} S(z,s) 
       & = \left[1 - s(T_{z}(H_0) -T_{\bar z}(H_0))J (1+sT_{z}(H_0)J)^{-1}\right]\sqrt{\Im T_{z}(H_0)}
    \\ & = \left[(1+sT_{z}(H_0)J)^{-1} + sT_{\bar z}(H_0)J (1+sT_{z}(H_0)J)^{-1}\right]\sqrt{\Im T_{z}(H_0)}
    \\ & =  (1 + sT_{\bar z}(H_0)J) (1+sT_{z}(H_0)J)^{-1}\sqrt{\Im T_{z}(H_0)}.
  \end{split}
\end{equation*}
The equality~\eqref{F: S sqrt(Im T) = sqrt(Im T) M} is proved similarly.
\end{proof}

\begin{lemma}\label{L: sqrt Im T_z zero kernel}
For $\Im z > 0,$ the operator $\sqrt{\Im T_z(H_0)}$ has trivial kernel.
\end{lemma}
\begin{proof}
Let $y=\Im z.$ By the first resolvent identity, 
\[
 \Im T_z(H_0) = yFR_z(H_0)(FR_z(H_0))^*.
\] 
Thus 
$
 \sqrt{\Im T_z(H_0)} = \sqrt{y}|(FR_z(H_0))^*|.
$
It follows that the kernel of $\sqrt{\Im T_z(H_0)}$ is the same as that of $(FR_z(H_0))^*,$
which is zero if~$F$ is bounded since in this case $(FR_z(H_0))^* = R_{\bar{z}}(H_0)F^*$ and both $R_{\bar{z}}(H_0)$ and $F^*$ have zero kernel. 
If $F$ is not bounded, 
then since
\[
 \ker(FR_z(H_0))^* = (\ran FR_z(H_0))^\perp,
\]
it suffices to see that $\ran FR_z(H_0) = \ran F(|H_0|+1)^{-1}$ is dense in~$\clK.$
This is true since both operators $(|H_0|+1)^{-1/2}$ and $F(|H_0|+1)^{-1/2}$ are bounded with dense range, which follows from our assumptions. 
In particular $F(|H_0|+1)^{-1/2}$ is bounded as a result of the inclusion $\dom |H_0|^{1/2} \subset \dom F$ (see e.g.~\cite[Remark~IV-1.5]{Kat80}).
\end{proof}

\begin{lemma} \label{L: S(z,s) is bdd for non-critical s}
Let $\Im z > 0.$ For any non-critical $s \in \mbC$ the operator $S(z,s)$ has a bounded inverse.
\end{lemma}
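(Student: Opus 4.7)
The plan is to establish invertibility of $S(z,s)$ via Fredholm theory: first realize $S(z,s)$ as a compact perturbation of the identity, and then use the identity (\ref{F: sqrt(Im T) S = M sqrt(Im T)}) from the preceding lemma, together with non-criticality of $s$, to rule out a non-trivial kernel.

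For the compactness step, I would note that assumption (2) makes $T_z(H_0) = FR_z(H_0)F^*$ compact; since $\Im z > 0$, the identity $\Im R_z(H_0) = (\Im z)\, R_z(H_0)^*R_z(H_0)$ gives $\Im T_z(H_0) \geq 0$, so $\sqrt{\Im T_z(H_0)}$ is a well-defined compact operator on $\clK$. Non-resonance of $s$ means $1 + sJT_z(H_0)$ has trivial kernel; since $JT_z(H_0)$ is compact the Fredholm alternative yields bounded invertibility, and the standard equivalence between invertibility of $1+AB$ and $1+BA$ for compact $AB$ transfers this to $1+sT_z(H_0)J$. Consequently,
\[
   K(z,s) := -\,2is\,\sqrt{\Im T_z(H_0)}\,J\,(1+sT_z(H_0)J)^{-1}\,\sqrt{\Im T_z(H_0)}
\]
is compact, with compact factors flanking a bounded middle, and $S(z,s) = 1 + K(z,s)$ is Fredholm of index zero.

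For injectivity, suppose $\phi \in \clK$ satisfies $S(z,s)\phi = 0$. Multiplying by $\sqrt{\Im T_z(H_0)}$ on the left and invoking (\ref{F: sqrt(Im T) S = M sqrt(Im T)}) gives
\[
   (1 + s T_{\bar z}(H_0)J)(1 + s T_z(H_0)J)^{-1}\sqrt{\Im T_z(H_0)}\,\phi = 0.
\]
Non-anti-resonance of $s$ guarantees, via the same $1+AB \leftrightarrow 1+BA$ argument, that $1 + sT_{\bar z}(H_0)J$ is boundedly invertible; combined with the invertibility of $1 + sT_z(H_0)J$ this forces $\sqrt{\Im T_z(H_0)}\,\phi = 0$. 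Substituting back into the definition (\ref{F: S(s,z)=I formula}) yields $0 = S(z,s)\phi = \phi$. An injective Fredholm operator of index zero is bijective, and the open mapping theorem supplies a bounded inverse.

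The only delicate bookkeeping point is the equivalence between invertibility of $1+sJT_{(\bar)z}(H_0)$ (the form in which the resonance and anti-resonance conditions are phrased) and invertibility of $1+sT_{(\bar)z}(H_0)J$ (the form that actually appears in the stationary scattering formula); this is routine because $JT$ and $TJ$ are both compact and share the same non-zero spectrum. No other substantive obstacle arises, so the lemma reduces to assembling these pieces.
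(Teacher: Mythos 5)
Your argument follows the same route as the paper: realize $S(z,s)$ as identity plus compact, use the factorization identity~(\ref{F: sqrt(Im T) S = M sqrt(Im T)}) to show that a non-trivial kernel vector would force $s$ to be anti-resonant, and conclude by the Fredholm alternative. You are in fact a bit more careful than the published proof, which asserts without comment that $(1+sT_z(H_0)J)^{-1}\sqrt{\Im T_z(H_0)}\,\phi_j$ is \emph{non-zero}; you close that small gap by observing that $\sqrt{\Im T_z(H_0)}\,\phi=0$ would already force $S(z,s)\phi=\phi$, contradicting $S(z,s)\phi=0$ with $\phi\neq 0$.
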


\begin{proof} By the analytic Fredholm alternative,
for non-resonant~$s$ the operator $S(z,s)$ is well-defined and bounded.
Let~$s$ be a such point.
By~\eqref{F: sqrt(Im T) S = M sqrt(Im T)} if $0\neq\varphi \in \ker S(z,s),$ 
then the vector
$
  (1+s T_{z}(H_0)J)^{-1}\sqrt{\Im T_{z}(H_0)} \varphi
$
belongs to~\eqref{F: 1+sT(bar z)J} and is non-zero by Lemma~\ref{L: sqrt Im T_z zero kernel}.
Therefore, in this case~$s$ is a zero corresponding to~$z.$
Hence, for non-critical~$s$ the operator $S(z,s)$ is bounded with zero kernel.
Since $S(z,s)-1$ is compact $S(z,s)$ 
is a bounded invertible operator.
\end{proof}

Let $\hz$ be the range of the operator $\sqrt{\Im T_{z}(H_0)}.$
As a consequence of Lemma~\ref{L: sqrt Im T_z zero kernel}, the set $\hz$ is a dense subspace of the Hilbert space~$\clK.$
For a non-critical~$s$ 
the operator $(1+s J T_{z}(H_0))^{-1} (1+s J T_{\bar z}(H_0))$ is well-defined and invertible.
Hence, by~\eqref{F: S sqrt(Im T) = sqrt(Im T) M} for such~$s$
$
  S(z,s) \hz = \hz.
$

Let
\begin{equation} \label{F: M-function}
     M(z,s) := (1+s T_{\bar z}(H_0)J)(1+s T_{z}(H_0)J)^{-1} 
            = 1 - 2is \Im T_z(H_0)(1+s T_{z}(H_0)J)^{-1}.
\end{equation}
The equality~\eqref{F: sqrt(Im T) S = M sqrt(Im T)} can be rewritten as
\begin{equation} \label{F: sqrt S = M sqrt}
  \sqrt{\Im T_{z}(H_0)} S(z,s) = M(z,s) \sqrt{\Im T_{z}(H_0)}.
\end{equation}
By Lemma~\ref{L: S(z,s) is bdd for non-critical s} 
for non-critical~$s$ the operator $S(z,s)$ is invertible, so $S(z,s)\clK = \clK.$
Hence, by~\eqref{F: sqrt S = M sqrt} for such~$s$ also $M(z,s) \hz = \hz.$
Thus, the following lemma has been proved.
\begin{lemma} \label{L: S Ez=Ez} For~$\Im z\neq 0$ and a non-critical~$s,$ $S(z,s) \hz = \hz$  and $M(z,s) \hz = \hz.$
\end{lemma}

For a given number~$z$ we say that a point~$s_0$ is a \emph{zero} of the meromorphic function~$S(z,s)$ 
iff~$0$ is an eigenvalue of~$S(z,s_0).$ It is shown below that this definition agrees with the previous one. 
The multiplicity of a zero~$s_0$ is the algebraic multiplicity of the eigenvalue~$0.$ 
An eigenvalue, distinct from~$1,$ we shall call a non-unital eigenvalue. 
By a \emph{generalised eigenvector} of an operator~$A$ corresponding to an eigenvalue~$\lambda$
we mean a non-zero solution of $(A-\lambda)^k f = 0$ for some positive $k.$ The smallest such $k$ 
is the \emph{order} of the generalised eigenvector.
\begin{lemma} Let~$z$ be a non-real complex number and~$s$ a non-critical value. Then 
(i) generalised eigenvectors (g.e.'s) of~$S(z,s)$ corresponding to non-unital eigenvalues (n.u.e.'s) belong to~$\hz.$
(ii) G.e.'s of~$M(z,s)$ corresponding to n.u.e.'s belong to~$\ran \Im T_z(H_0).$ 
(iii)~A~vector~$\varphi$ is a g.e.\ for~$S(z,s)$ corresponding to a n.u.e.\ iff $\sqrt{\Im T_z(H_0)} \varphi$ is 
a g.e.\ of the same order for~$M(z,s),$ corresponding to the same n.u.e.
\end{lemma}
\begin{proof} The first assertion follows directly from~\eqref{F: S(s,z)=I formula}.
The second assertion follows from the second of the two equalities~\eqref{F: M-function}.
The third assertion follows from~\eqref{F: sqrt S = M sqrt}, the previous two assertions 
and the fact that $\sqrt{\Im T_z(H_0)}$ has trivial kernel for non-real~$z$ by Lemma~\ref{L: sqrt Im T_z zero kernel}.
\end{proof}
\begin{cor} The operators $S(z,s)$ and $M(z,s)$ have identical spectra, including multiplicities of eigenvalues.
\end{cor}
\begin{prop} \label{P: poles and zeros of S and M} For any non-real complex number~$z$ the meromorphic functions $S(z,s)$
and $M(z,s)$ have the same sets of poles and zeros, including their multiplicities. 
\end{prop}
\begin{proof} The poles of these functions clearly coincide. 
On the other hand, $s$ is a zero of multiplicity~$N$ for $S(z,s)$ or $M(z,s)$ iff zero is an eigenvalue of $S(z,s)$ or $M(z,s)$ of multiplicity $N.$ 
The previous corollary completes the proof. 
\end{proof}

This proposition allows us to work with either $S(z,s)$ or $M(z,s)$ as far as 
we are concerned only with the eigenvalues of these operators.

\begin{cor} \label{C: unused} 
For any non-real~$z$ the meromorphic function $S(z,s)$ of~$s$ has poles at resonance points $r_z$
and zeros at anti-resonance points $\bar r_z.$ There are no other poles and zeros of $S(z,s).$ Further, 
multiplicities of a pole $r_z$ and of a zero $\bar r_z$ coincide.
\end{cor}

We define the \emph{$S$-index} of a critical point $r_z$ as the total number of (anticlockwise) windings
of eigenvalues of $S(z,s)$ around zero as~$s$ makes one winding around~$r_z.$
Due to possible branching, the contribution of individual eigenvalues to the $S$-index can be non-integral,
but the $S$-index itself is an integer. 

\begin{prop} \label{P: S-index = alg mult}
For non-real $z,$ the $S$-index of a critical point $r_z$ is~$N_2-N_1,$
where $N_1$ (respectively, $N_2$) is the algebraic multiplicity of $r_z$ as a resonance point
(respectively, as an anti-resonance point).
\end{prop}
\begin{proof} 
Let the {\em index} of a loop in $G(\clK)$ be the total number of windings around zero of its eigenvalues, which is homotopy invariant (see e.g.~\cite{ADT}). 

(A) We note that by homotopy invariance the $S$-index of a critical point $r_z$ is equal to the sum of the $S$-indices 
of resonance points into which $r_z$ splits when~$z$ is slightly perturbed.

(B) A resonance point as a function of~$z$ cannot coincide with an anti-resonance point.
Indeed, a (anti-)resonance point is a (anti-)holomorphic function of~$z,$
so they can coincide only if they are constants, which is not possible. 

To complete the proof it now suffices to prove the following statement. 

(C) Let~$z$ be a non-real number and $r_z$ a critical point 
of algebraic multiplicity~$N$ corresponding to~$z.$
If $r_z$ is not anti-resonant then its $S$-index is~$-N,$
and if $r_z$ is not resonant then its $S$-index is~$N.$

The cumbersome wording of this statement aims to avoid the possibility that a critical point is both resonant and anti-resonant. 
According to (A) and (B) above,
one can, by slightly perturbing $z,$ achieve a situation in which none of a finite number of resonance points is an anti-resonance point.

Proof. 
In the proof we assume that $r_z$ is a resonance point, the other case is similar.
By~(A) and~(B), we can also assume that $r_z$ is a non-splitting resonance point.
By the definition of a resonance point, the operator 
$
  1 + sT_z(H_0)J|_{s=r_z}
$ 
has zero as an eigenvalue, $\eps(s)|_{s=r_z},$ of multiplicity~$N.$
By assumption, $\eps(s)$ is single-valued in a neighbourhood of~$r_z.$
When~$s$ is perturbed to a value 
close to $r_z,$ the zero eigenvalue $\eps(r_z)$ shifts to $\eps(s)$ and does not split.
By the argument principle (see e.g.~\cite[\S\S\,34]{Sha92})
when~$s$ makes one winding around~$r_z$ the eigenvalue~$\eps(s)$ makes one winding around zero.
Since this eigenvalue has the same multiplicity~$N$ as~$r_z,$
the total number of windings of eigenvalues of $1+sT_z(H_0)J$ around zero
is~$N.$
Hence, the total number of windings of eigenvalues $\eps(s)^{-1}$ of the operator
$
  (1+sT_z(H_0)J)^{-1}
$
around zero, as~$s$ makes one winding around~$r_z,$ is equal to~$-N.$

Since by the premise $r_z$ is not anti-resonant the operator 
$
  1 + sT_{\bar z}(H_0)J|_{s=r_z}
$ 
is invertible.
Moreover, its loop formed as $s$ makes a winding around $r_z$ can be continuously deformed to the identity operator within~$G(\clK).$ 
Therefore as $s$ makes one winding around~$r_z,$ the index of $(1+sT_z(H_0)J)^{-1}$ is the same as that of $M(z,s).$ 
Since by Proposition~\ref{P: poles and zeros of S and M} the operators $S(z,s)$ and $M(z,s)$ have the same spectra, this completes the proof. 
\end{proof}

\begin{prop} \label{P: eigenvalues of S(z,s) converge uniformly}
For any $\lambda \in \mbR$ the operator $S(\lambda+iy,s)$
converges to the identity operator as $y \to +\infty$ locally uniformly with respect to $s \in \mbR.$
Moreover, eigenvalues $\eps_j(\lambda+iy,s)$ of the operator $S(\lambda+iy,s)$
also converge to~1 as $y \to +\infty$ locally uniformly with respect to $s \in \mbR.$
\end{prop}
\begin{proof}
By~\eqref{F: S(s,z)=I formula}, 
the first part of this proposition would be proved if we show that $T_{\lambda+iy}(H_0)$
converges to zero as $y \to +\infty.$ Combining this with unitarity of $S(z,s)$ would 
also prove the second part of the theorem.
For bounded $F$ this follows from $\|T_{\lambda+iy}(H_0)\| \leq y^{-1} \|F\|^2.$ 
Hence, by the condition (3) we can assume that $H_0$ is semi-bounded.
Since $H_0$ is semi-bounded, say lower-bounded, there is a real $m$ such that 
$R_m(H_0)$ is positive. Since $FR_m(H_0)F^*$ is compact, so is $F \sqrt{R_m(H_0)}.$
The operator $(R_{\lambda+iy}/R_m)(H_0)$ converges $*$-strongly to $0$
as $y \to \infty.$ 
Therefore its product with $F \sqrt{R_m(H_0)}$
on the left and $\big(F \sqrt{R_m(H_0)}\big)^*$ on the right converges in norm (see e.g.~\cite[Lemma 6.1.3]{Yaf92}). 
\end{proof}

\smallskip
{\it Proof of Theorem~\ref{T: main thm}.} 
For a continuous path $\gamma$ of 
operators in $G(\clK)$ which begins at the identity operator,
its $\mu$-invariant $\mu(\theta, \gamma)$ is the spectral flow of eigenvalues 
of the path through the ray $re^{i\theta},$ $r\in[0,\infty).$ 
A continuous deformation of a path $\gamma$ does not change its 
$\mu$-invariant provided that its ends stay fixed.
(The homotopy invariance of spectral flow is well-known. 
Our favoured interpretation in this context is based on continuous enumeration of the eigenvalues, see e.g.~\cite{ADT}.)
For $y>0$ and real~$s$ the operator $S(\lambda+iy,s)$ 
is continuous in the rectangle $[y_0,Y_0] \times [0,1]$
for any $0 < y_0 < Y_0 < +\infty.$ 
Proposition~\ref{P: eigenvalues of S(z,s) converge uniformly} shows 
that $S(\lambda+iy,s)$ is continuous in the larger rectangle 
$
  [y_0,+\infty] \times [0,1].
$
Hence, the $\mu$-invariant $\mu(\theta,\lambda; H_1,H_0)$ equals 
the $\mu$-invariant of the path 
$
  (0,1) \to (y_0,1) \to (y_0,0)
$ 
for any $y_0>0,$ see Fig.~\ref{Fig: deformation}. 
The difference between the $\mu$-invariants of this path, see Fig.~\ref{Fig: mu-invariant},
and of the one which 
circumvents the critical points of the group of~$r_\lambda$ 
from above, see Fig.~\ref{Fig: a.c. mu-invariant}, is equal to the sum of the $S$-indices of the points from 
the upper half-plane, Fig.~\ref{Fig: s. mu-invariant}. 
By Proposition~\ref{P: S-index = alg mult} 
this sum of the $S$-indices is equal to the negative of the resonance index of~$r_\lambda.$

Further, the $\mu$-invariant of the path which circumvents 
the critical points from above does not change as $y \to 0^+$
(the critical points of the group of~$r_\lambda$ converge 
to~$r_\lambda$ but this point is circumvented).
The path thus obtained can further be continuously deformed to the straight 
line path from $s=1$ to $s=0$ in the complex plane, since when $y=0$
the function $S(\lambda+iy,r)$ is holomorphic in a neighbourhood of the real axis, 
by~\eqref{F: S(r) is continuous}.
Hence, the $\mu$-invariant of this path is equal to the absolutely continuous 
part of the $\mu$-invariant.
This completes the proof.

\end{document}